\newcounter{todocounter}
\newcommand{\todonum}{\stepcounter{todocounter}{(\thetodocounter)}}
\def\shownotes{1}   
\newcommand{\authnote}[2]{{ $<<$\textsf{\footnotesize \todonum\  #1 notes:  #2}$>>$}}
\newcommand{\authnote}[2]{}
\newcommand{\IA}[1]{} 
\newcommand{\DM}[1]{} 
\newcommand{\Thelma}{Thelma}
\newcommand{\Velma}{Velma}
\newcommand{\Zelma}{Zelma}
\newtheorem{claim}{Claim}
\mathchardef\mhyphen="2D 
\title{Revisiting Fast Practical Byzantine Fault Tolerance: \\
\Thelma, \Velma, and \Zelma}
\author{Ittai Abraham, Guy Gueta, Dahlia Malkhi\\
VMware Research\\
\mbox{}\\
Jean-Philippe Martin\\
Verily
}
\begin{document}

\maketitle

\newcommand{\ignore}[1]{}

\begin{abstract}
In a previous note~\cite{revisit-BFT}, we observed a safety violation in
Zyzzyva~\cite{K07,K08,KAD09} and a liveness violation in FaB~\cite{MA05,MA06}. 
In this manuscript, 
we sketch fixes to both. The same view-change core is applied in the two 
schemes, and additionally, applied to combine them and create a single, enhanced scheme
that has the benefits of both approaches. 
\end{abstract}

\section{Introduction} \label{sec:intro}
The crux of a view-change protocol is a mechanism that guarantees that 
a decision in a new view does not conflict 
with a decision that can ever be committed in any lower view. 
In~\cite{revisit-BFT}, we exposed safety issues with the view-change mechanism
of Zyzzyva~\cite{K07,K08,KAD09}, and liveness issues
with that of FaB~\cite{MA05,MA06}. 

In this manuscript, we sketch fixes for both. 
The principles we provide concentrate around a core view-change scheme that is
applied in the two schemes.

The difficulty in protocols like FaB and Zyzzyva is that they combine a
fast-track decision with a recovery-track. Therefore, a possible decision is
tranfered across views in two ways, corresponding to the two tracks, and
combining them requires care.
Simply put, in our approach a replica accepts a leader proposal in a new view as
\emph{safe} only if it is compatible
with a potential decision from the highest view-number of any lower view. 

We first sketch in \S\ref{sec:fab} a solution modeled after FaB, that we name
\Thelma{}, for a single-shot consensus.
Borrowing from FaB, \Thelma{}
provides an optimistically fast BFT solution 
in a fault model parameterized by $n=3f+2t+1$ with the following
guarantees. It is fast during periods
of synchrony, and in face of up to $t$ non-leader failures. It is always safe
against $f$ Byzantine failures.

We proceed in \S\ref{sec:ZZ} with a solution modeled after Zyzzyva, that we name
\Velma{}, for
state-machine-replication. Borrowing from Zyzzyva, \Velma{} provides an
optimistically fast execution track in a fault model of $n=3f+1$ 
with the following guarantees. It reaches a commit decision on client
requests in three-hops during periods of
synchrony and of no failures. It is always safe against up to $f$ Byzantine
failures.

Both \Thelma{} and \Velma{} require replicas to maintain
information from a constant-bounded number of previous views, and to send
constant-bounded information in new-view messages. This improved on two
previous solution frameworks that have optimistically tracks:
Refined-Quorum-Systems~\cite{RQS}, a fast single-shot Byzantine
consensus, and Azyzzyva~\cite{sevenBFT}, a fast State-Machine-Replication. In
both of these previous works, replicas maintain/send
information from all past views.

Finally, we combine in \S\ref{sec:zelma} the benefits of the parameterized fault model $n=3f+2t+1$
with full state-replication in a solution, that we name \Zelma{}. In \Zelma, a
decision on a client request is committed in the fast track during periods of
synchorny, when up to $t$ non-leader replicas are faulty. \Zelma{} provides
safety at all times against up to $f$ Byzantine failures. It guarantess liveness during
periods of synchrony with up to $f$ failures.

In all three protocols, we shed light on correctness via a
proof sketch. Formal algorithm descriptions and correctness proofs are deferred
to a future manuscript.

\subsection{Preliminaries} \label{sec:prelims}
The focus of this work is providing state-machine-replication (SMR) 
for $n$ \textit{replicas}, $f$ of which can be Byzantine faulty. An unbounded
set of \textit{clients} may form \textit{requests} and submit them to replicas.
We refer to members of the system, replicas or clients, as \textit{nodes}. 
The communication among nodes is authenticated, reliable, but asynchronous; that is, we assume that a message sent
from a correct node to another correct node is signed and eventually arrives. 

At the core of SMR is a protocol for deciding on a growing log of operation requests by clients, satisfying the following properties: 

\begin{description}

\item[Agreement] If two correct replicas commit decisions at log position
 $s$, then the decisions are the same.

\item[Validity] If a correct replica commits a decision at some log position,
then it was requested (and signed) by some client.

\item[Liveness] If some correct client submits a request, and the system is
eventually partially-synchronous~\cite{DLS_jacm88}, then eventually the replicas
commit some decision.

\end{description}

In the case of \Thelma{} (as in FaB), we
concentrate only on the core consensus problem for a single decision.

\subsection*{View Change}

The solutions we discuss employ a classical framework that revolves around an
explicit ranking among proposals via \emph{view} numbers.

Replicas all start with an initial view, and progress from one view to the next.
They accept requests and respond to messages only in their current view.

In each view there is a single designated \textit{leader}.  
In a view, zero or more decisions may be reached.
This strategy separates safety from liveness: It maintains safety even if the
system exhibits arbitrary communication delays and again up to $f$ Byzantine
failures; it provides progress during periods of synchrony.

If a sufficient number of replicas suspect that the leader is faulty, then a
view change occurs and a new leader is elected.
The mechanism to trigger moving to a higher view is of no significance for
safety, but it is crucial for liveness. On the one hand, replicas must not be
stuck in a view without progress; on the other hand, they must not move to a
higher view capriciously, preventing any view from making progress. 
Hence, a replica moves to a higher view if either
a local timer expires, or if it receives new view suggestions from
$f+1$ replicas. Liveness relies on having a constant
fraction of the views with a correct leader, whose communication with correct replicas is timely, 
thus preventing $f+1$ replicas from expiring.

Dealing with leader replacement is the pinnacle of both safety and liveness. 
A core aspect in forming agreement against failures is the need for new leaders
to safely adopt previous leader values. The reason is simple, it could be that a
previous leader has committed a decision, so the only safe thing to do is adopt
his value. 

In the prevailing solutions for the benign settings (DLS~\cite{DLS_jacm88},
Paxos~\cite{paxos}, VR~\cite{oki_podc88}, Raft~\cite{ongaro_atc14}), leader
replacement is done by reading from a view-change quorum of $n-f$ replicas, and
choosing the value with the maximal view \footnote{In DLS, the term
\textit{phase} is used, and in Paxos, \textit{ballot}.} number.
Note that
$n-f$ captures a requirement that the quorum intersects every leader-quorum in previous views (not only the most recent one). It is crucial to take into consideration how leader quorums of multiple previous views interplay.
Choosing the value with the maximal view-number is crucial because there may be
multiple conflicting values and choosing an arbitrarily value is not always a
safe decision.


A similar paradigm holds in PBFT~\cite{CL99,CL02}. The new leader needs to read
from a \emph{view-change quorum} of $n-f$
replicas and choose a value with the maximal view-number.
Different from the benign case, in
the Byzantine settings, uniqueness is achieved by using enlarged, Byzantine
quorums~\cite{MR_dc98}. Byzantine quorums guarantee intersection not just in any
node but in a correct node. 

In Byzantine settings, a correct node also needs to prove a decision value to a new leader.
This is done in PBFT\footnote{We refer here to the PBFT version with signed
messages~\cite{CL99}.} by adding another phase before a decision.  
The first phase ensure uniqueness via \emph{prepare} messages from $n-f$ nodes.
In the second phase, nodes send a \textit{commit-certificate} consisting of
$n-f$ prepare messages. A decision can be reached when a \emph{commit-quorum} of $n-f$ nodes have sent a commit-certificate.  

The two-phase scheme guarantees the follows. If there is a decision, there
exists a correct
node in the intersection between a commit-quorum and a view-change quorum that passes a
commit-certificate to the next view. 

Indeed, a new leader chooses in PBFT a value whose commit-certificate, rather
than a prepare, has the maximal view-number.

\newpage
\section{\Thelma{}: Revisiting the FaB View-Change} \label{sec:fab}
\subsection{A Skeletal Overview of PFaB}

Martin and Alvisi introduce Fast Byzantine Consensus (FaB) 
in~\cite{MA05, MA06}, a family of protocols parameterized by various resilience assumptions.
The papers use the Paxos terminology to model roles: \textit{proposers}, \textit{acceptors}, and
\textit{learners}. And it employs \textit{proposal numbers} to enumerate
proposals. We will adhere to the Zyzzyva (and PBFT) terminology, and translate
those to \emph{leaders}, \emph{replicas}, and \emph{view-numbers}.

FaB has two variants.  The first FaB variant works with $n=5f+1$ replicas, 
trading fast termination by with reduced resilience. 
Here, we focus on the second variant, parameterized with
$n=3f+2t+1$, where $t \le f$. We refer to it here as PFaB.
It works in two tracks, a fast track and a recovery track.  

The fast track protocol of PFaB is an easy two-step protocol.
A leader pre-proposes a value to replicas, who each \textit{accept} one value
per view and respond with a \emph{prepare} message. A decision is reached in
PFaB when a \emph{fast-quorum} of $n-t$ replicas accept the leader's proposal and send a prepare response for it. 

The fast track is guaranteed to complete in periods of synchrony with a correct
leader and up to $t$ Byzantine replicas.  However,
parameterized FaB does not necessarily guarantee fast progress
even in periods of synchrony, if the parameter $t$ threshold of failures is exceeded. 
That is, although PFaB is always safe despite up to $f$ Byzantine failures,
it is not always fast.  

If progress is stalled, PFaB allows progress via a recovery protocol, which is
essentially PBFT (adapted to $n=3f+2t+1$). 
The recovery track is guaranteed to complete during periods of synchrony
if the number of actual Byzantine failures does not exceed $f$.

More precisely, in PFaB, the recovery track revolves around forming a
commit-certificate called a \textit{commit-proof}.  When replicas accept a
leader proposal, in addition to
sending \textit{prepare} messages ({\tt ACCEPTED}) to the leader, replicas also
send signed prepare messages to each other.  We say that a replica has a
commit-certificate for a value $v$ if it receives in a view prepare messages for $v$ 
from a \emph{recovery-quorum} of $(n-f-t)$ replicas.
Upon obtaining a commit-certificate, a replica
sends it in a \textit{commit} message ({\tt COMMITPROOF}) to other replicas.  

A decision is reached if either a fast-quorum of $n-t$ replicas send prepare messages (for the same
value), or a recovery-quorum of $(n-f-t)$ replicas send commit messages (for the same value).

The core mechanism in PFaB for transferring safe values across views is a
\emph{progress certificate} containing new-view messages ({\tt REP}) from a
\emph{progress-quorum} of $n-f$ replicas. 
A new-view message from a replica contains the new view's
number,
the last value it sent in a prepare message, and the last
commit-certificate it sent in a commit message. 

In PFaB, a progress-certificate for a specific new view is said to \textit{vouch for} a value $v$ if there does not exist
a set of $f+t+1$ new-view messages with an identical prepare value $v'$ such that $v' \neq v$; and
there does not exist any commit-certificate with value $v'$ such that $v' \neq v$.

\subsection{\Thelma}

We now outline a new view-change scheme within the above PFaB protocol
framework. We will refer to the fixed protocol as \Thelma. 

In order to fix PFaB, each replica needs to maintain 
with the last prepare and commit messages it sent
their original view numbers.  When a replica 
copies its last prepare and commit messages into
a new-view message, it needs to attach 
the original view numbers to them.

A decision is transferred across views via a progress-certificate as follows. 

\begin{itemize}

\item 
A possible fast-track decision is transferred across views via
a set of prepares intersecting a progress-quorum. Complicating
matters,  each prepare may be repeated in
higher views, hence different prepares in the intersection may carry different view-numbers. 

\item
A possible recovery track decision
is transferred across views via a commit-certificate.

\end{itemize}

To combine possible decision values from both tracks, replicas
need to choose the highest previous view in which a
decision is possible. If both tracks appear possible for the same view-number,
then a commit-certificate provides evidence against a
potential fast-track decision in the same view.

More specifically,
let $P$ be a progress-certificate consisting
of view-change messages from a progress-quorum of $n-f$ replicas. 

In order to simplify processing $P$, we introduce several key notions.

\begin{description}

\item[fast-certificate($d$):] The highest view-number $v$ such that $f+t+1$ prepare messages in
$P$ contain the value $d$ and a view-number at least $v$. If no
such $v$ exists, we set fast-certificate($d$) to $-1$.

\item[PFAST($d$):]
If fast-certificate($d$) has the highest view-number among fast-certificates in
$P$, then PFAST($d$) is true; otherwise, it is false.

\item[PSLOW($d$):]
If a commit-certificate for $d$ exists in $P$ and has the highest view-number
among commit-certificates, then PSLOW($d$) is true; otherwise, it is false.

\end{description}

We are now ready to determine when a value $d$ is safe for a progress-certificate $P$:

\begin{enumerate}

    \item PSLOW($d$) holds, and for all $d'$, fast-certificate($d'$) has
view-number no higher than the commit-certificate for $d$, or

    \item PFAST($d$) holds, and for all $d'$, a commit-certificate for $d'$ if
exists has view-number lower than fast-certificate($d$), or

    \item for no value $d'$ does PSLOW($d'$) or PFAST($d'$) hold; hence, all
values are safe.

\end{enumerate}

\subsection{Examples}

To demonstrate \Thelma{}'s view-change, we revisit the ``stuck'' scenario
in~\cite{revisit-BFT}, as well as another scenario.

For these scenarios, we set $f=1$, $t=0$, $n=3f+2t+1=4$.
Denote the replicas by $i_1$, $i_2$, $i_3$, $i_4$, one of whom, say $i_1$, is Byzantine.

\medskip
\noindent
The first scenario goes through one view change.

\paragraph{View 1:}

  \begin{enumerate}
  \item Leader $i_1$ (Byzantine) pre-proposes value $d$ to $i_2$, $i_3$.

  \item $i_1$, $i_2$, $i_3$ send prepare messages for $d$.

  \item $i_2$ collects a view-1 commit-certificate for $d$ and sends a commit message
for $d$. 

  \item Meanwhile, the leader $i_1$ equivocates and pre-proposes $d'$ to $i_4$. 
  
  \end{enumerate}

\paragraph{View 2:}

  \begin{enumerate}

  \item The new leader $i_2$ collects a progress-certificate consisting of new-view messages from a quorum of $3$ replicas (including itself):

	\begin{itemize}
                \item from $i_1$, the new-view message contains a prepare for
$d'$ from view 1, and no commit-certificate.
                \item from $i_2$, the new-view message contains a prepare for
$d$ from view 1, and a view-1 commit-certificate for it.
                \item from $i_4$, the new-view message contains a prepare for
$d'$, and no commit-certificate.
     \end{itemize}
	
  \end{enumerate}

In this progress-certificate, PSLOW($d$) holds, and no fast-certificate has a
view-number higher than $d$'s commit-certificate view-number ($1$).
Therefore, it determines $d$ as the only safe value to propose. 

Indeed, notice that a decision on $d$ is still possible in view 1: $i_3$ may send a
commit message, and $i_1$ (Byzantine) may send a commit message even though it
already moved to view 2.

\hrule
\hrule
\medskip
\noindent
The second scenario goes through two view changes.

\paragraph{View 1:}

  \begin{enumerate}
  \item Leader $i_1$ (Byzantine) pre-proposes value $d$ to $i_2$, $i_3$.

  \item $i_1$, $i_2$, $i_3$ send prepare messages for $d$.

  \item $i_1$ collects a commit-certificate for $d$ (and stalls). 

  \item Meanwhile, the leader $i_1$ equivocates and pre-proposes $d'$ to $i_4$. 
  
  \end{enumerate}

\paragraph{View 2:}

  \begin{enumerate}

  \item The new leader $i_2$ collects a progress-certificate consisting of new-view messages from a quorum of $3$ replicas (including itself):

	\begin{itemize}
                \item from $i_1$, a new-view message contains the prepare for
$d'$ from view 1, and no commit-certificate.
                \item from $i_2$, a new-view message contains the prepare for
$d$ from view 1, and no commit-certificate.
                \item from $i_4$, a new-view message contains the prepare for $d'$, and no commit-certificate.
     \end{itemize}

  \item $d'$ is a safe value for the progress-certificate since PFAST($d'$) it
true, and there are no commit-certificates.  $i_2$ uses the progress-certificate to pre-propose $d'$ 
to replicas as a safe value.

  \item everyone sends prepare messages for $d'$, and a client learns that $d'$ is
committed.
	
  \end{enumerate}

\paragraph{View 3:}

  \begin{enumerate}

   \item The new leader $i_3$ collects a progress-certificate consisting of
new-view messages from a quorum of $3$ replicas: 

	\begin{itemize}

                \item from $i_1$, the new-view message hides the fact that it
prepared a value in view $2$, and contains a prepare for $d$ from view 1, and
a view-1 commit-certificate for it. 

                \item from $i_3$ and $i_4$, the new-view message contains a
prepare for $d'$ from view $2$ 
	\end{itemize}

\end{enumerate}

In this progress-certificate, PFAST($d'$) holds, and the highest
commit-certificate has view-number $1$ (for $d$).
Therefore, it determines $B$ as the only safe value to propose. 

Note that this scenario demonstrates that a commit-certificate may not
necessarily override a set of $f+1$ prepares, unless its view-number is at
least that of the highest fast-certificate.

\subsection{Correctness}

\begin{claim}
Let a value $d$ (ever) become a committed decision in view $v$ in the fast track.
Then the progress-certificate for every higher view $v' > v$ determines $d$ as
the only safe value.
\end{claim}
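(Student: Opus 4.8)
The plan is to argue by strong induction on $v'$, proving the slightly stronger statement that \emph{every} valid progress-certificate for a view $v' > v$---one assembled from the new-view messages of a progress-quorum of $n-f$ replicas, with every commit-certificate in it genuinely signed---determines $d$ as its unique safe value. The induction hypothesis is precisely what controls the intermediate views: for each $w$ with $v < w < v'$ a correct replica prepares in view $w$ only after checking a valid progress-certificate and that the proposal is safe for it, so by hypothesis it prepares only $d$ in every such view. Fix a progress-certificate $P$ for $v'$ and write $F_x$ for fast-certificate($x$) and $C_x$ for the highest view-number of a commit-certificate for $x$ in $P$ (with $C_x=-1$ if none). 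To lower-bound $F_d$, note that a fast-track commit of $d$ in view $v$ means some fast-quorum $FQ$ of $n-t$ replicas sent prepare for $d$ in view $v$; intersecting $FQ$ with the progress-quorum behind $P$ leaves at least $(n-t)+(n-f)-n = 2f+t+1$ replicas, hence at least $f+t+1$ correct ones, each of which prepared $d$ in view $v$ and, by the induction hypothesis, prepared only $d$ in any later view it participated in. Their last prepares---the ones carried in their new-view messages---are therefore for $d$ with view-number $\ge v$, so $P$ contains $f+t+1$ such prepares and $F_d \ge v$.

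Next I would show every competitor is strictly dominated below $v$. At most $t$ correct replicas lie outside $FQ$, and every correct replica in $FQ$ prepared $d$, not $d'$, in view $v$; together with the induction hypothesis (no correct replica prepares any $d'\neq d$ in a view of $(v,v')$), the prepares for a fixed $d'\neq d$ in $P$ of view $\ge v$ number at most $t$ correct plus at most $f$ Byzantine, i.e.\ at most $f+t<f+t+1$, whence $F_{d'}<v$. The same count eliminates slow-track evidence: a valid commit-certificate for $d'$ in view $w$ embeds $n-f-t$ signed prepares for $d'$ in view $w$, forcing at least $f+t+1$ correct replicas to prepare $d'$ in view $w$---impossible for $w>v$ by the hypothesis and for $w=v$ because only $\le t$ correct replicas can prepare $d'\neq d$ there. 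Hence $C_{d'}<v$ as well, and in summary $F_{d'},C_{d'}<v\le F_d$ for every $d'\neq d$.

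The conclusion then follows from a two-way case split on $P$. If $C_d\ge F_d$, then since every other commit-certificate lies below $v\le C_d$ we get PSLOW($d$), and every fast-certificate is dominated ($F_d\le C_d$, and $F_{d'}<v\le C_d$), so safety condition~(1) holds; if instead $C_d<F_d$ then PFAST($d$) holds (as $F_d\ge v>F_{d'}$) and every commit-certificate is dominated ($C_d<F_d$, and $C_{d'}<v\le F_d$), so condition~(2) holds. Either way $d$ is safe. For uniqueness, no $d'\neq d$ can qualify: PFAST($d'$) fails because $F_{d'}<F_d$, so condition~(2) is out for $d'$; condition~(1) is out because even if $d'$ carried the top commit-certificate, $F_d\ge v>C_{d'}$ would break its requirement that all fast-certificates sit no higher than $C_{d'}$; and condition~(3) is inapplicable since $d$ already satisfies PFAST($d$) or PSLOW($d$). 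Thus $d$ is the only safe value for $P$, completing the induction.

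The crux---and the part of the argument where the FaB fix actually resides---is the cross-track bookkeeping at view $v$ itself: I must prevent a rival $d'\neq d$ from acquiring either a fast-certificate or, more dangerously, a commit-certificate of view-number $\ge v$ that could outrank $d$'s fast-track evidence. This is exactly where the relation $n=3f+2t+1$ is spent twice over, once so that the fast-quorum intersection retains $f+t+1$ honest witnesses for $d$, and once so that any recovery-quorum compels $f+t+1$ honest prepares---strictly more than the $t$ honest replicas that can ever back a competing value in view $v$. The remaining delicate point is purely structural: the statement must be proved in its universally-quantified form so that ``correct replicas prepare only $d$ in the intermediate views'' is delivered by the induction hypothesis rather than silently assumed.
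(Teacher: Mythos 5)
Your proof is correct and follows essentially the same route as the paper's: a minimal-counterexample (strong-induction) argument in which the intersection of the fast-quorum of $n-t$ replicas with the progress-quorum of $n-f$ replicas yields $f+t+1$ correct prepares for $d$ with view-number at least $v$, while the count of at most $t$ correct replicas outside the fast quorum plus $f$ Byzantine replicas caps every competitor's fast-certificate and commit-certificate below $v$. If anything, your write-up is more careful than the paper's at two points it glosses over: you make explicit that the induction hypothesis is what forces correct replicas to prepare only $d$ in the intermediate views, and your final case split on whether $C_d \ge F_d$ handles the possibility that $d$ itself carries a high commit-certificate, selecting safety condition (1) or (2) accordingly.
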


\begin{proof}[Proof Sketch]
Since $d$ becomes a committed decision in view $v$ in the fast track, there is a
fast-quorum $Q$ of $n-t$ replicas that send prepare messages for $d$ in view $v$, 
before moving to any higher view. 

By way of contradiction, let $P'$ be the progress-certificate whose view-number
$v' > v$ is the lowest, such that $d$ is not the only safe value for $P'$.
We are going to draw certain conclusions about PFAST and PSLOW for $P'$ in order to
arrive at a contradiction.

\paragraph{PFAST for $P'$.}
First, let us compute PFAST for $P'$.

Denote by $Q'$ the progress-quorum of $n-f$ replicas whose new-view messages are included in
$P'$. 
$Q$ and $Q'$ intersect in a set of at least $f+t+1$ correct replicas. These
replicas report the prepare messages they sent in a view $v$ or higher (up to $v'-1$). 
By assumption, these prepares all contain the value $d$. Hence, fast-certificate($d$)
is at least $v$ in $P'$. 

By assumption, no value $d' \neq d$ is safe to propose in a view higher than $v$
and less than $v'$.
Hence, the number of replicas with prepare messages in $P'$ for any value $d'
\neq d$ with view $v$ or higher is at most $f+t$. Specifically, in view $v$, 
there may be at most $t$ correct replicas outside $Q$ that have prepares for $d'$.
Additionally, there may be $f$ Byzantine replicas with prepares for $d'$ with
arbitrary view numbers. 
In total, there are not enough prepares for fast-certificate($d'$) to be
$v$ or higher. 

We conclude that PFAST($d$) is true, and for every other $d'$, PFAST($d'$) is
false.

\paragraph{PSLOW for $P'$.}
We now compute PSLOW for $P'$. Once again,
we already showed that the number of replicas with prepare messages 
for any value $d' \neq d$ 
whose view is $v$ or higher is at most $f+t$. 
In total, there are not enough prepares for 
a commit-certificate on $d'$ to have view $v$ or higher.

Therefore, either PSLOW($d'$) is false, or its commit-certificate has
view-number lower than $v$.

\medskip

\noindent
Putting the constraints on PFAST and PSLOW for $P'$ together, we conclude that
$d$ is the only safe value for $P'$, and we arrive at a contradiction.
\end{proof}

\begin{claim}
Let a value $d$ (ever) become a committed decision in view $v$ in the recovery track.
Then the progress-certificate for every higher view $v' > v$ determines $d$ as the
safe value.
\end{claim}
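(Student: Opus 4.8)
The plan is to mirror the structure of the proof of the fast-track claim, but to drive the contradiction through PSLOW and safe-value rule~(1) rather than through PFAST and rule~(2). First I would unpack the hypothesis: since $d$ is committed in view $v$ via the recovery track, there is a recovery-quorum $R$ of $n-f-t$ replicas that sent commit messages for $d$ in view $v$, and each of them therefore holds a commit-certificate for $d$ of view-number $v$ (that is, $n-f-t$ prepares for $d$ issued in view $v$). Assume for contradiction that $v'>v$ is the lowest view admitting a progress-certificate $P'$, with progress-quorum $Q'$ of $n-f$ replicas, for which $d$ is not the only safe value. By the minimality of $v'$, every progress-certificate of a view $w$ with $v<w<v'$ vouches for $d$ alone, so a correct replica accepts (and hence prepares) only $d$ in those intermediate views; this is the fact I will lean on repeatedly.

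Next I would show that $d$'s commit-certificate survives into $P'$. Intersecting $R$ with $Q'$ gives $|R\cap Q'|\ge (n-f-t)+(n-f)-n = f+t+1$, which exceeds $f$, so the intersection contains a correct replica $r$. Since $r$ sent a commit message for $d$ in view $v$, the commit-certificate it reports in its new-view message has view-number at least $v$; I will argue below that any such surviving certificate can only be for $d$, so $P'$ contains a commit-certificate for $d$ of view-number $\ge v$.

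The technical heart is a counting bound ruling out high certificates for any $d'\neq d$. In view $v$ the $n-f-t$ prepares for $d$ come from distinct replicas, of which at least $n-2f-t=f+t+1$ are correct and thus locked to $d$ (one prepare per view); hence at most $n-(f+t+1)=2f+t$ replicas can carry a prepare for $d'$ in view $v$. Since $2f+t < n-f-t$, no $d'\neq d$ can amass a view-$v$ commit-certificate. For any view $w$ with $v<w<v'$ only the up-to-$f$ Byzantine replicas can prepare $d'$, which falls below both the $n-f-t$ needed for a commit-certificate and the $f+t+1$ needed for a fast-certificate. Combining the two regimes yields: (i) no commit-certificate for any $d'\neq d$ has view-number $\ge v$, so $r$'s surviving certificate is indeed for $d$, PSLOW($d$) holds, and PSLOW($d'$) fails; and (ii) fast-certificate($d'$) $\le v$ for every $d'\neq d$, because a fast-certificate exceeding $v$ would need $f+t+1$ prepares for $d'$ strictly above view $v$, of which at most $f$ exist.

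Finally I would assemble these facts against the safe-value rules. Rule~(1) is met for $d$: PSLOW($d$) holds and every fast-certificate($d'$) has view-number $\le v$, which is at most the view-number of $d$'s commit-certificate. No competing $d'$ can be safe: rule~(1) fails for it since PSLOW($d'$) is false; rule~(2) fails since any PFAST($d'$) would require every commit-certificate to sit strictly below fast-certificate($d'$) $\le v$, yet $d$'s commit-certificate has view-number $\ge v$; and rule~(3) is inapplicable because PSLOW($d$) holds. Thus $d$ is the only safe value for $P'$, contradicting the choice of $v'$. The step I expect to be the main obstacle is the view-$v$ count: the margin is exactly one ($2f+t$ versus the threshold $n-f-t=2f+t+1$), so the argument must carefully pin down which replicas are genuinely forced to $d$, account for Byzantine equivocation among the prepares underlying the commit-certificate, and confirm that the certificate the correct witness $r$ forwards is not some later certificate for a different value.
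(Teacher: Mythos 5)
Your proof is correct and follows essentially the same route as the paper's: the same minimal-counterexample setup, the same quorum-intersection bound $(n-f-t)+(n-f)-n = f+t+1$, the same $2f+t$ counting bound ruling out high fast- and commit-certificates for any $d'\neq d$, and the same conclusion that PSLOW($d$) holds while competing values fail every safe-value rule. Your version is if anything slightly more careful than the paper's sketch—you count the correct replicas inside the commit-certificate's own prepares rather than assuming replicas in $Q$ prepared $d$, and you explicitly walk through why rules (1)--(3) each fail for $d'\neq d$, which the paper leaves implicit.
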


\begin{proof}[Proof Sketch]
Since $d$ becomes a committed decision in view $v$ in the slow track, there is a
recovery-quorum
$Q$ of $n-f-t$ replicas that send commit messages for $d$ in view $v$, 
before moving to any higher view. 

By way of contradiction, let $P'$ be the progress-certificate whose view-number
$v' > v$ is the lowest, such that $d$ is not the (only) safe value for $P'$.
We are going to draw certain conclusions about PFAST and PSLOW for $P'$ in order to
arrive at a contradiction.

\paragraph{PFAST for $P'$.}
First, let us compute PFAST for $P'$.

Denote by $Q'$ the progress-quorum of $n-f$ replicas whose new-view messages are included in
$P'$. 

By assumption, no value $d' \neq d$ is safe to propose in a view higher than $v$
and less than $v'$.
However, there may be $f$ Byzantine replicas in $Q'$ with prepares for $d'$ with
arbitrary view numbers. 
There may be additionally up to $f+t$ correct replicas outside $Q$ that have prepares for $d'$ in view $v$.
Therefore, fast-certificate($d'$), if non-negative, can be at most $v$.
We conclude that either PFAST($d'$) is false,  or fast-certificate($d'$) has view number at most $v$,
or both.

\paragraph{PSLOW for $P'$.}
We now compute PSLOW for $P'$. 
We already showed that the number of replicas with prepare messages in view $v$
or higher for any value $d' \neq d$ is at most $2f+t$. 
In total, there are not enough prepares for 
a commit-certificate on $d'$ to have a view number $v$ or higher.

On the other hand, $Q$ and $Q'$ intersect in a set of at least $t+1$ correct replicas. These
replicas report the commit message they sent in view $v$ or higher (up to $v'-1$). 
By assumption, these commits all contain the value $d$. Hence,
PSLOW($d$) holds, and for no other $d'$ is PSLOW($d'$) true.

\medskip

\noindent
Putting the constraints on PFAST and PSLOW for $P'$ together, we conclude that
$d$ is the only safe value for $P'$, and we again arrive at a contradiction.
\end{proof}

\begin{claim}
Every progress certificate determine some safe value.
\end{claim}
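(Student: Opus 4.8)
The plan is to prove \emph{existence} of a safe value by collapsing the three safety rules into a single comparison of two quantities: the largest fast-certificate view-number present in $P$ and the largest commit-certificate view-number present in $P$. First I would fix a progress-certificate $P$, set $V_{\mathrm{fast}}$ to be the maximum of fast-certificate($d$) over all values $d$ (attained, whenever it is non-negative, by some value $a$), and set $V_{\mathrm{slow}}$ to be the maximum view-number of any commit-certificate appearing in $P$ (attained, whenever one exists, by some value $b$). The crucial preliminary observation is that these global maxima are automatically witnesses for the activation predicates: whenever $V_{\mathrm{fast}}\geq 0$ the maximizing value $a$ satisfies PFAST($a$) by definition, and whenever a commit-certificate exists the maximizing value $b$ satisfies PSLOW($b$). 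This is what lets me convert statements about extremal view-numbers into applicability of rules 1 and 2.

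The core of the argument treats the case where at least one commit-certificate appears in $P$, via a two-way split on the comparison of $V_{\mathrm{slow}}$ and $V_{\mathrm{fast}}$, chosen so that the case boundaries align exactly with the inequalities in the rules. If $V_{\mathrm{slow}}\geq V_{\mathrm{fast}}$, I would take $d=b$: PSLOW($b$) holds, and every fast-certificate has view-number at most $V_{\mathrm{fast}}\leq V_{\mathrm{slow}}$, which is precisely the commit-certificate view-number of $b$, so rule 1 certifies $b$ as safe. If instead $V_{\mathrm{fast}}>V_{\mathrm{slow}}$, then $V_{\mathrm{fast}}>V_{\mathrm{slow}}\geq 0$ forces $a$ to exist; I would take $d=a$, so that PFAST($a$) holds and every commit-certificate has view-number at most $V_{\mathrm{slow}}<V_{\mathrm{fast}}$, the latter being fast-certificate($a$), whence rule 2 certifies $a$ as safe. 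The point I would emphasize is that rule 1 uses a non-strict inequality while rule 2 uses a strict one, so the tie $V_{\mathrm{fast}}=V_{\mathrm{slow}}$ is resolved in favour of the commit-certificate (rule 1); this matches the stated design intent that a commit-certificate outweighs a set of $f+t+1$ prepares at the same view.

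It then remains to treat progress-certificates containing no commit-certificate at all. If some value still has a non-negative fast-certificate, then $a$ exists with PFAST($a$) true, and rule 2's commit-certificate condition is vacuously satisfied, so $a$ is safe; if neither any commit-certificate nor any non-negative fast-certificate is present, PSLOW and PFAST fail for every value and rule 3 declares all values safe. I expect the main obstacle to be the bookkeeping around these activation conditions rather than any deep inequality: one must verify that the value attaining each maximum genuinely satisfies the corresponding PFAST/PSLOW predicate, and that the strict-versus-non-strict boundary leaves no gap and no overlap at the tie. The single spot I would argue explicitly is the degenerate regime in which all fast-certificates equal $-1$, checking that the comparison still lands in rule 1 (with $-1\leq V_{\mathrm{slow}}$) whenever a commit-certificate is present, so that the case analysis remains exhaustive.
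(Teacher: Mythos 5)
Your proof is correct, but it takes a genuinely different route from the paper's. The paper dispatches this claim with a single structural observation: the three rules are ``positive'' --- no value is ever explicitly excluded by a progress-certificate --- hence some safe value always exists. That argument is essentially non-constructive and leaves implicit the one thing that actually needs checking, namely that the side-conditions of rules 1--3 cannot all fail simultaneously. You instead exhibit an explicit witness: compare the maximum fast-certificate view-number $V_{\mathrm{fast}}$ against the maximum commit-certificate view-number $V_{\mathrm{slow}}$, pick the maximizing commit-certificate value under rule 1 when $V_{\mathrm{slow}} \geq V_{\mathrm{fast}}$, pick the maximizing fast-certificate value under rule 2 when $V_{\mathrm{fast}} > V_{\mathrm{slow}}$, and fall back to rule 3 (or to rule 2 with a vacuous commit-certificate condition) when the relevant certificates are absent. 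Your observation that the non-strict inequality in rule 1 versus the strict one in rule 2 makes the case split exhaustive with no gap at the tie --- resolving ties in favor of the commit-certificate --- is exactly the design intent the paper states informally (``a commit-certificate provides evidence against a potential fast-track decision in the same view''), and your proof makes that load-bearing. What your approach buys is twofold: it is actually rigorous where the paper's sketch is a slogan, and it doubles as an algorithm a new leader can run to select a safe value. One small wrinkle: in the degenerate case where every fast-certificate equals $-1$, the paper's definition of PFAST can be read as holding (vacuously, by tie) for every value, so your assertion that PFAST fails for all values there is interpretation-dependent; but under either reading the certificate lands in a rule that declares all values safe, so your conclusion stands.
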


\begin{proof}[Proof Sketch]
Since the rules for determining the safe value for a certificate are all
positive, i.e., no values are explicitly ruled out by a certificate, there is
always a possible safe value for every progress certificate. 
\end{proof}

\newpage
\section{\Velma{}: Revisiting the Zyzzyva View-Change} \label{sec:ZZ}
\subsection{A Skeletal Overview of Zyzzyva}

Zyzzyva~\cite{K07,K08,KAD09} is a full State-Machine-Replication (SMR) protocol that has two commit paths. 
A two-phase path that resembles PBFT and a fast path. 

The fast path does not have commit messages, 
and replicas speculatively execute requests and optimistically return
\emph{prepare} results directly to clients.  A client learns a commit decision in the fast path by
seeing $3f+1$ prepare messages.
The optimistic mode is coupled with a recovery mode that guarantees progress in
face of failures. The recovery mode intertwines a two-phase ($2f+1$)-quorum exchange into
the protocol.
In the two-phase recovery mode, replicas proceed to speculatively execute commands as well.
In both modes, replicas may need to roll back speculative executions if in the end
they conflict with committed decisions.

In Zyzzyva, a possible decision value is transferred across views in two
possible ways, corresponding to the two decision tracks of the protocol (fast
and two-phase):
In the fast track, a possible decision value manifests itself as $f+1$ prepare
messages. 
In the two-phase track, it manifests itself as a commit-certificate. 
Combining the two, Zyzzyva prefers a commit-certificate over $f+1$
prepares; and among two commit-certificates, it prefers the one with the longer
request-log.  

We proceed with a skeletal description of the Zyzzyva
sub-protocols, a fast-track sub-protocol, a two-phase sub-protocol, and a
view-change sub-protocol. 
Our description omits details regarding checkpoint management, and many other optimizations, which are not crucial for correctness considerations, 
and are described in the original papers.

\paragraph{Messages.}

All messages in the protocol are signed and may be forwarded carrying the original sender's signature. 
The protocol makes use of the following interactions.

\begin{description}

\item{Client-request:}
A \textit{client-request} (REQUEST) from a client to the leader contains some
operation $o$, whose semantics are completely opaque for the purpose of this
discussion. 

\item{Ordering-request:}
A leader's \emph{pre-prepare} message is called an \textit{ordering-request}
(ORDER-REQ), and contains a leader's
log of client requests $OR_n = (o_1, ..., o_n)$. 
(In practice, the leader sends only the last request and a hash of the history of prior 
operations; a node can request the leader to re-send any missing 
operations.)

\item{Ordering-response:}
When a replica \textit{accepts} a valid pre-prepare request, it speculatively
executes it and sends the result in 
a \emph{prepare} message called an \emph{ordering-response} (SPEC-RESPONSE). 

\item{Commit-request:}
A \textit{commit-request} (COMMIT) from the client to the replicas includes a \textit{commit-certificate} $CC$, 
a set of $2f+1$ signed replica responses (SPEC-RESPONSE) to an
(identical) ordering-request $OR_n$.

\item{Commit-response:}
When a replica obtains a valid commit-certificate $CC$ for $OR_n$, it responds
to client requests in $OR_n$
with a \textit{commit} message called a \emph{commit-response} (LOCAL-COMMIT).

\item{View-change:}
A \textit{view-change} (VIEW-CHANGE) message from a replica to the leader of a
new view captures the replica's \emph{local state}. 

\item{New-view:}
A \textit{new-view} (NEW-VIEW) message from the leader of a new view contains a
set $P$ of view-change messages the leader collected, which serves as a
\emph{leader-proof}. It includes an ordering-request for a \emph{leader-log} $G_n = (o_1, ..., o_n)$. 

\end{description}

\paragraph{The fast-track sub-protocol.}

Zyzzyva contains a fast-track protocol in which a client learns the result of a
request in
only three message latencies, and only a linear number of crypto operations. 
It works as follows.

A client sends a request $o$ to the current leader. The
current leader extends its local log with the request $o$ to $OR_n$, and sends a
pre-prepare (ordering-request) carrying $OR_n$. We did not say how a leader's local
log is initialized. Below we discuss the protocol for a leader to pick an
initial log when starting a new view. 

A replica \textit{accepts} a pre-prepare from the leader of the current
view if it has valid format, and it extends
any previous pre-prepare from this leader. 
Upon accepting a pre-prepare, a replica extends its local log to $OR_n$ 
It speculatively executes it, 
and sends the result directly to the client in a \emph{prepare} message.

A decision is reached on $OR_n$ in view $v$ in the fast track 
when $3f+1$ distinct replicas have sent a prepare message for it. 

\paragraph{The two-phase sub-protocol.}

If progress is stalled, then a client waits to collect a
\emph{commit-certificate}, a set of $2f+1$ prepare responses for 
$OR_n$. 
Then the client sends a commit-request carrying the commit-certificate to the
replicas. A replica responds to a valid commit-request with a
\emph{commit} message.

A decision is reached on $OR_n$ in view $v$ in the two-phase
track when $2f+1$ distinct replica have sent a commit message for it.

\paragraph{The view-change protocol.}

The core mechanism in Zyzzyva for transferring safe values across views is
for a new Zyzzyva leader to collect a set $P$ of
view-change messages from a quorum of $2f+1$ replicas. Each replica sends a
view-change message containing the replica's \textit{local state}:
Its local request-log,
and the commit-certificate with the highest view number it responded to with a
commit message, if any.

The leader processes the set $P$ as follows.

\begin{enumerate}
 \item
  Initially, it sets a \emph{leader-log} $G$ to an empty log.
  
  \item
  If any view-change message contains a valid commit-certificate, then it
selects the one with the longest request-log $OR_n$ and copies $OR_n$ to $G$.  
 
\item
If $f+1$ view-change messages contain the same request-log $OR'_m$,
   then it extends the tail of $G$ with requests from $OR'_m$. (If there are two
$OR'_m$ logs satisfying this, one is selected arbitrarily.)
   
\item
Finally, 
it pads $G$ with null request entries up to the length of the longest log of any
valid prepare.

\end{enumerate}

The leader sends a new-view message to all the replica. The message
includes the new view number $v+1$, the set $P$ of view-change messages the
leader collected as a \emph{leader-proof} for view $(v+1)$, and the \emph{leader-log}
$G$. A replica accepts a new-view
message if it is valid, and \emph{adopts} the leader log. It may need to roll
back speculatively executed requests, and process new ones.


\subsection{\Velma}

We now outline a new view-change scheme within the above Zyzzyva protocol
framework. We will
refer to the fixed protocol as \Velma.

In order to fix Zyzzyva, we change the method for the leader and for replicas to
select safe leader-logs during a view-change. Let
$P$ be a set of view-change messages from a view-change quorum of $2f+1$ replicas.

In order to simplify processing $P$, we introduce several key notions.
These are similar, but not identical to those introduced in \Thelma, because in \Velma{} we need to determine safety of request-logs, rather than of a single value.

\begin{description}

\item[extends:] 
An \emph{extends} relation between two request-logs $O_1$, $O_2$,
denoted $O_1 \sqsubseteq O_2$, indicates that $O_1$ is a prefix (not
necessarily strict) of $O_2$. If $O_1 \not\sqsubseteq O_2$ and $O_2 \not\sqsubseteq O_1$
then they are \emph{conflicting}. 

\item[fast-certificate($O$):]
\DM{Is this clearer?}
The highest view-number $v$ such that $f+1$ prepare messages in $P$ contain a
log that extends $O$ and a view-number at least $v$.

To explain this notion, recall that a
fast-track decision on a log $O$ in some view $v$ intersects a view-change quorum in $f+1$ correct replicas. 
However, since a committed decision may be repeatedly proposed, and possibly
extended, in higher views, each of these $f+1$ replicas sends a new-view message
with a prepare containing a log that may extend $O$, and may have a view-number $v$ or
higher. 

If no such view-number exists, we set fast-certificate($O$) to $-1$.

\item[slow-certificate($O$):]
The highest view-number $v$ for which a commit-certificate exists for $O$.

If no such view-number exists, we set slow-certificate($O$) to $-1$.

\item[PFAST($O$):]
\DM{Is this clearer?}
If fast-certificate($O$) has the highest view-number among fast-certificates in
$P$, and there is no $O'$ extending $O$ (i.e. $O \sqsubset O'$) with the same
fast-certificate, then PFAST($O$) is true; otherwise, it is false.

\item[PSLOW($O$):]
If slow-certificate($O$) has the highest view-number
among slow-certificates, and there is no $O'$ extending $O$ (i.e. $O \sqsubset
O'$) with the same slow-certificate, then PSLOW($O$) is true; otherwise it is false.

\end{description}

We are now ready to determine when a log $O$ is safe for a progress-certificate $P$:

\begin{enumerate}

    \item PSLOW($O$) holds, and for all $O'$, fast-certificate($O'$) is
lower than slow-certificate($O$), or

    \item PFAST($O$) holds, and for all $O'$, slow-certificate($O'$)
is lower than fast-certificate($O$), or

    \item PSLOW($O$) holds, and for every $O'$ whose 
fast-certificate($O'$) has the same view-number as slow-certificate($O$), we have $O' \sqsubseteq O$, or

    \item PFAST($O$) holds, and for every $O'$ whose 
fast-certificate($O'$) has the same view-number as slow-certificate($O$), we have $O' \sqsubseteq O$, or

    \item for no value $d'$ does PSLOW($d$) or PFAST($d$) hold; hence, all
values are safe.

\end{enumerate}

\subsection{Examples}

To demonstrate \Velma{}'s view-change, we revisit the two safety-violation scenarios
in~\cite{revisit-BFT}.

Our first scenario requires four replicas $i_1$, $i_2$, $i_3$, $i_4$, of which one, $i_1$, is Byzantine. 
It proceeds in $3$ views, and arrives at a conflicting decision on the first log position.

\paragraph{View 1: Creating a commit-certificate for $(a)$.}

\begin{enumerate}
\item Leader $i_1$ sends pre-prepare with log $(a)$ to replicas $i_2$ and
$i_3$.

\item Leader $i_1$ (Byzantine) equivocates and sends pre-prepare with log
$(b)$ to replica $i_4$.

\item
Replicas $i_2$, $i_3$ speculatively execute $a$, obtain a speculative result and send it in a
prepare message to a client.

\item The client collects a commit-certificate $cert$ of view-1 prepares from $i_1$, $i_2$, $i_3$
for the log $(a)$ and sends it to $i_1$.

\end{enumerate}

\paragraph{View 2: Deciding $(b)$.}

\begin{enumerate}

\item The new leader $i_2$ collects view-change messages from a quorum of $3$
(including itself) as follow: 
  \begin{itemize}
  \item Replica $i_2$ sends its view-1 prepare for log $(a)$.
 
  \item Replica $i_4$ sends its view-1 prepare for log $(b)$:

  \item Replica $i_1$ (which is Byzantine) joins $i_4$ and sends a view-1
prepare for log $(b)$. 

  \end{itemize}  
  
Based on these view-change messages, PFAST is $(1, (b))$ and PSLOW is $(-1,
\bot)$. Hence, $(b)$ is the only safe choice.

$i_2$ sends a new-view message consisting of the log $(b)$, using
the set of view-change messages as proof that this is a safe value.

\item
Every replica zeros its log (undoing $a$, if needed), speculatively execute
$b$, and sends a view-2 prepare for log $(b)$.

\item 
A client collects speculative-responses from all replicas, and 
$b$ becomes successfully committed at log position $1$.

\end{enumerate}

\paragraph{View 3: Choosing the right commit-certificate.}

\begin{enumerate}

\item The new leader $i_3$ collects view-change messages from a quorum of $s$ as
follow:

  \begin{itemize}
  \item Replica $i_1$, which is Byzantine, hides the value it prepared in view
$2$, and sends a view-1 commit-certificate $cert$ (see above) for $(a)$.

  \item Replicas $i_3$ and $i_4$ send their view-2 prepares for log $(b)$.

  \end{itemize}
    
Based on these view-change messages, PFAST is $(2, (b))$, PSLOW is $(1, (a))$,
and $(b)$ is the only safe choice.

\end{enumerate}


\medskip
\noindent
The second scenario is also rather short, uses four replicas, and two view
changes.

\paragraph{View 1: Creating a commit-certificate for $(a_1, a_2)$.}

\begin{enumerate}
\item Leader $i_1$ sends pre-prepare with log $(a_1,a_2)$ to replicas $i_2$ and
$i_3$.

\item Leader $i_1$ (Byzantine) equivocates and sends pre-prepare with log
$(b_1,b_2)$ to replica $i_4$.

\item
Replicas $i_2$, $i_3$ speculatively execute $a_1$ followed by $a_2$, obtain a speculative result and send it in a
prepare message to a client.

\item The client collects a commit-certificate $cert_1$ of view-1 prepares from $i_1$, $i_2$, $i_3$
for the log $(a_1,a_2)$ and sends it to $i_3$.

\end{enumerate}

\paragraph{View 2: Deciding $(b_1)$.}

\begin{enumerate}

\item The new leader $i_2$ collects view-change messages from a quorum of $3$
(including itself) as follow: 
  \begin{itemize}
  \item Replica $i_2$ sends its view-1 prepare for log $(a_1,a_2)$.
 
  \item Replica $i_4$ sends its view-1 prepare for log $(b_1, b_2)$.

  \item Replica $i_1$ (which is Byzantine) joins $i_4$ and sends a view-1
prepapre for log $(b_1,b_2)$. 
      
  \end{itemize}  
  
Based on these view-change messages, PFAST is $(1, (b_1,b_2))$i, and PSLOW is $(-1,
\bot)$. Hence, $(b_1,b_2)$ is the only safe choice.

$i_2$ sends a new-view message consisting of the log $(b_1,b_2)$, using
the set of view-change messages as proof that this is a safe value.

\item
Every replica zeros its log (undoing $a_1$,$a_2$, if needed). It first proceeds
to speculatively execute $b_1$, and sends a view-2 prepare for log $(b_1)$.

\item 
A client collects a commit-certificate $cert_2$ of view-1 prepares from $i_1$,
$i_2$,$i_4$ and send it to replicas  
$i_1$, $i_2$, and $i_4$. They respond with a commit message for log $(b_1)$. 

\item 
The client collects commit messages and $b_1$ becomes successfully
committed at log position $\mathbf{1}$.

\end{enumerate}

\paragraph{View 3: Choosing the right commit-certificate.}

\begin{enumerate}

\item The new leader $i_3$ collects view-change messages from a quorum of $s$ as
follow:

  \begin{itemize}
  \item Replica $i_3$ sends commit-certificate $cert_1$ (see above) for $(a_1, a_2)$.

  \item Replica $i_4$ sends commit-certificate $cert_2$ (see above) for $(b_1)$,
and its local log $(b_1, b_2)$.

  \item Replica $i_1$ (Byzantine) can join either one, or even send an
view-change message with an empty log. 
  \end{itemize}
    
Based on these view-change messages, PFAST is $(2, (b_1))$, and PSLOW has
view-number at most $2$. Therefore, $(b_1)$ is the only safe choice for log
position $1$.

\end{enumerate}

\subsection{Correctness}

The correctness argument for \Velma{} are similar in essence to \Thelma.
However, care must be taken to preserve consistency across a sequence of consensus
decisions, rather than one. And each decision must wait for execution results of a log of requests,
rather than commit to an individual proposal value. 

\begin{claim}
Let a request-log $OR_n$ (ever) become a committed decision in view $v$ in the fast track.
If a leader-proof of a higher view $v' > v$ determines $OR$ as a safe
leader-log, then $OR_n \sqsubseteq OR$.
\end{claim}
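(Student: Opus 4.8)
The plan is to mirror the structure of the fast-track correctness proof for \Thelma{} (the first Claim), adapting the quorum-intersection arguments to the logged, extension-ordered setting of \Velma{}. Suppose $OR_n$ becomes a committed fast-track decision in view $v$. Then a fast-quorum of $3f+1$ distinct replicas sent view-$v$ prepare messages for $OR_n$ before moving on, so at least $2f+1$ correct replicas did so. I would argue by contradiction: among all view-change quorums $P'$ of higher view $v' > v$ that determine some safe leader-log $OR$ with $OR_n \not\sqsubseteq OR$, pick the one of least view-number $v'$. The induction hypothesis is that every intermediate view (between $v$ and $v'$) that reached a decision only extended $OR_n$, so any log legitimately carried forward by a correct replica still has $OR_n$ as a prefix.

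The key computation is to evaluate PFAST and PSLOW for $P'$. The view-change quorum of $2f+1$ replicas backing $P'$ intersects the fast-quorum of view $v$ in at least $f+1$ correct replicas. Each such correct replica reports, in its new-view message, its last prepare, whose log extends $OR_n$ (using the induction hypothesis, since any re-proposal in an intermediate view preserved the $OR_n$ prefix) and whose view-number is at least $v$. Hence fast-certificate($OR_n$) $\ge v$ in $P'$. The crucial counting step is to bound the evidence available for any conflicting log $O'$ with $O' \not\sqsubseteq OR_n$: only the $f$ Byzantine replicas, plus the at most $f$ correct replicas in $Q' \setminus Q$, can supply prepares for such an $O'$ with view $\ge v$ — a total of at most $2f < f+1$ cannot both exceed and at most $2f$ total, so no conflicting log can attain a fast-certificate of view $v$ or higher, and no commit-certificate for a conflicting $O'$ can reach view $v$ either. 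I would then conclude that PFAST($OR$) together with the safety rules forces $OR$ to extend $OR_n$, and that no slow-certificate can dominate, contradicting $OR_n \not\sqsubseteq OR$.

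The main obstacle, and the place where \Velma{} genuinely differs from \Thelma{}, is that safety is now about log \emph{prefixes} rather than equality of a single value, so the tie-breaking rules (cases 3 and 4, where fast- and slow-certificates share a view-number) must be handled carefully. In particular I must show that when fast-certificate($OR_n$) equals slow-certificate($O'$) for some conflicting $O'$, the requirement ``$O' \sqsubseteq O$'' in the safe-value definition cannot be satisfied by a log $O$ that fails to extend $OR_n$; this is where the definitions of PFAST and PSLOW (which exclude any strictly-extending log with the same certificate view) do the real work. The subtle point is that a Byzantine replica can \emph{hide} a prepare it made in an intermediate view (as the examples illustrate), so I cannot assume the adversary's reported logs are monotone; the argument must rely only on the $f+1$ honest intersecting replicas and on the counting bound above, never on adversarial cooperation. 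Once the certificate-domination and prefix-consistency facts are established, the contradiction follows exactly as in the single-value case, and the induction closes.
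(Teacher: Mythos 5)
Your overall architecture matches the paper's: take the lowest-view counterexample $P'$, use minimality to argue that correct replicas only prepare extensions of $OR_n$ in intermediate views, intersect the fast-quorum $Q$ with the view-change quorum $Q'$ to get $f+1$ correct replicas forcing fast-certificate($OR_n$) $\ge v$, and then bound the support for any conflicting log. But your key counting step fails, and it fails in a way that matters. You bound the prepares available for a conflicting $O'$ (with view $\ge v$) by ``the $f$ Byzantine replicas, plus the at most $f$ correct replicas in $Q' \setminus Q$,'' for a total of $2f$. First, this is internally inconsistent with your own setup: you correctly state that the fast-track quorum is $3f+1$ distinct replicas, i.e., \emph{all} replicas, so $Q' \setminus Q$ is empty; there are no correct replicas outside $Q$ to account for (that extra term is a carry-over from the \Thelma{} setting, where the fast quorum is $n-t$ and $t$ correct replicas can sit outside it). Second, and fatally, a bound of $2f$ does not give you what you claim: fast-certificate($O'$) $\ge v$ requires only $f+1$ prepares, and $2f \ge f+1$ for all $f \ge 1$, so ``no conflicting log can attain a fast-certificate of view $v$ or higher'' simply does not follow (your inequality ``$2f < f+1$'' is false). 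The bound of $2f$ is good enough only for the PSLOW half (a commit-certificate needs $2f+1$ prepares), which is why the paper is careful to establish the sharper bound of $f$ for the PFAST half: since $Q$ contains every replica, every correct member of $Q'$ prepared $OR_n$ in view $v$ and (by minimality) only prepared extensions of $OR_n$ afterward, so only the $f$ Byzantine replicas can report prepares supporting $O'$ with view $\ge v$, and $f < f+1$ closes the argument.

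Fixing this is straightforward given what you already wrote: delete the $Q' \setminus Q$ term, observe $Q' \subseteq Q$, and note that since no log can extend both $OR_n$ and a conflicting $O'$ (two prefixes of a common log are always comparable), every correct replica's reported prepare with view $\ge v$ is unavailable to $O'$. With the bound of $f$ in place, your treatment of the remaining pieces (PSLOW, and the tie-breaking cases 3 and 4 of the safety rules) goes through essentially as in the paper.
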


\begin{proof}[Proof Sketch]
Since $OR_n$ becomes a committed decision in view $v$ in the fast track, there is a
fast-quorum $Q$ of $n$ replicas that send prepare messages for $OR_n$ in view $v$, 
before moving to any higher view. 

By way of contradiction, let $P'$ be the leader-proof whose view-number
$v' > v$ is the lowest, 
and $OR'$ a safe leader-log of $P'$ conflicting with $OR_n$.
We are going to draw certain conclusions about PFAST and PSLOW for $P'$ in order to
arrive at a contradiction.

\paragraph{PFAST for $P'$.}
First, let us compute PFAST for $P'$.

Denote by $Q'$ the view-change quorum of $2f+1$ replicas whose new-view messages are included in
$P'$. 
$Q$ and $Q'$ intersect in a set of at least $f+1$ correct replicas. These
replicas report the prepare messages they sent in a view $v$ or higher (up to $v'-1$). 
By assumption, these prepares do not conflict with $OR_n$.
Hence, fast-certificate($OR_n$) is at least $v$ in $P'$. 

By assumption, 
$OR'$ is not safe to propose in a view higher than $v$ and less than $v'$. Hence, the number of replicas with prepare messages in $P'$ for $OR'$ is at most $f$. 

In total, there are not enough prepares for fast-certificate($OR'$) to be $v$ or
higher. 
We conclude that PFAST($OR'$) is false for every $OR'$ conflicting with $OR_n$, and PFAST($OR'_n$) is true for some some $OR'_n$ extending $OR_n$.

\paragraph{PSLOW for $P'$.}
We now compute PSLOW for $P'$. 
We already showed that the number of prepares for $OR'$
with view-number $v$ or higher is at most $f$. 
Therefore, there are not enough prepares for 
a commit-certificate on $OR'$ to have view $v$ or higher.

Therefore, either PSLOW($OR'$) is false, or its commit-certificate has view-number lower than $v$.

\medskip

\noindent
Putting the constraints on PFAST and PSLOW for $P'$ together, we conclude that
every safe leader-log for $P'$ extends $OR_n$, and we arrive at a contradiction.
\end{proof}

\begin{claim}
Let a request-log $OR_n$ (ever) become a committed decision in view $v$ in the
two-phase track.
If a leader-proof of a higher view $v' > v$ determines $OR$ as a safe
leader-log, then $OR_n \sqsubseteq OR$.
\end{claim}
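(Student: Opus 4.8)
The plan is to transpose the recovery-track argument for \Thelma{} (and the fast-track argument for \Velma{}) to the two-phase track, adjusting the quorum intersections to $n=3f+1$ and replacing equality of values by the prefix order $\sqsubseteq$ on logs. Since $OR_n$ becomes committed in view $v$ in the two-phase track, there is a commit-quorum $Q$ of $2f+1$ replicas that sent commit messages (LOCAL-COMMIT) for $OR_n$ in view $v$ before advancing to any higher view. Arguing by contradiction, I would take $P'$ to be the leader-proof of the lowest view $v'>v$ that determines a safe leader-log $OR'$ conflicting with $OR_n$, and write $Q'$ for its view-change quorum of $2f+1$ replicas; the goal is to show the view-change rules cannot in fact certify such an $OR'$.

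First I would handle the PSLOW side. Because $|Q|=|Q'|=2f+1$ while $n=3f+1$, the two quorums meet in at least $f+1$ replicas, hence in at least one correct replica. That replica committed $OR_n$ (or a log extending it in some view up to $v'-1$), so its view-change message carries a commit-certificate for a log $\hat{O}\sqsupseteq OR_n$ with view-number at least $v$; thus slow-certificate($\hat{O}$)$\,\ge v$ in $P'$. I then bound how many correct replicas could ever have prepared a log conflicting with $OR_n$ at view $\ge v$: at most $f$ correct replicas lie outside $Q$, and by minimality of $v'$ no conflicting log is safe in any view of $(v,v')$, so a conflicting prepare from a correct replica can only originate in view $v$ itself, where the original leader may have equivocated. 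Since a valid commit-certificate requires $f+1$ correct prepares, none for a conflicting log at view $\ge v$ can exist; hence slow-certificate($OR'$)$\,<v$ for every conflicting $OR'$, PSLOW holds for the $OR_n$-extension $\hat{O}$, and PSLOW fails for every conflicting log.

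Next I would bound fast-certificates. Within $P'$ the prepares supporting a conflicting $OR'$ with view $\ge v$ number at most $2f$ (the $\le f$ correct ones, all from view $v$, together with $\le f$ Byzantine ones), and at views strictly above $v$ only the $f$ Byzantine prepares survive, short of the $f+1$ threshold; hence fast-certificate($OR'$)$\,\le v$ for every conflicting $OR'$. Combining, slow-certificate($\hat{O}$)$\,\ge v$ whereas every conflicting $OR'$ has slow-certificate$\,<v$ and fast-certificate$\,\le v$. Now split on whether some $OR_n$-extension attains a fast-certificate strictly above $v$ in $P'$. If it does, then PFAST($OR'$) cannot hold for any conflicting $OR'$ (its fast-certificate is not the highest), so only the PSLOW-based rules could certify $OR'$; rules~(1) and~(3) fail because PSLOW($OR'$) is false, and rule~(5) fails globally because $\hat{O}$ already witnesses PSLOW.

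The hard part is the remaining, exact-tie case, where fast-certificate($OR'$)$\,=v=\,$slow-certificate($\hat{O}$) and no $OR_n$-extension beats view $v$ on the fast track, so PFAST($OR'$) might hold and I must defeat the fast-track rules~(2) and~(4). Rule~(2) fails because slow-certificate($\hat{O}$)$\,=v$ is not strictly below fast-certificate($OR'$)$\,=v$. Rule~(4) is exactly the extends-based tie-break that \Velma{} introduces beyond the strict rules: in a tie it certifies only a log into which every competitor at the tied view extends, so certifying $OR'$ would require $\hat{O}\sqsubseteq OR'$, which forces $OR_n\sqsubseteq\hat{O}\sqsubseteq OR'$ and contradicts that $OR'$ conflicts with $OR_n$. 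Thus no rule certifies $OR'$, contradicting its choice, and every safe leader-log of $P'$ extends $OR_n$; the residual possibility that the determined log is a strict prefix of $OR_n$ is excluded the same way, since PSLOW($\hat{O}$) always prefers the longer $OR_n$-extension carrying the same slow-certificate. This tie analysis is the crux, as it is precisely the scenario that the earlier safety violation and \Velma{}'s extra tie rules are designed to address.
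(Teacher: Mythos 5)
Your proof is correct and follows essentially the same route as the paper's: the same contradiction setup with the lowest-view leader-proof $P'$, the same quorum-intersection argument giving a commit-certificate for an $OR_n$-extension with view at least $v$, and the same counting bounds showing at most $2f$ prepares (and hence no commit-certificate, and a fast-certificate of at most $v$) for any conflicting log. Your explicit walk through the five safety rules, in particular the exact-tie case resolved by the extends-based tie-break, merely spells out what the paper compresses into its final ``putting the constraints together'' sentence.
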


\begin{proof}[Proof Sketch]
Since $OR_n$ becomes a committed decision in view $v$ in the slow track, there is a
two-phase quorum
$Q$ of $2f+1$ replicas that send commit messages for $OR_n$ in view $v$, 
before moving to any higher view. 

By way of contradiction, let $P'$ be the leader-proof whose view-number
$v' > v$ is the lowest, 
and $OR'$ a safe leader-log of $P'$ conflicting with $OR_n$.
We are going to draw certain conclusions about PFAST and PSLOW for $P'$ in order to
arrive at a contradiction.

\paragraph{PFAST for $P'$.}
First, let us compute PFAST for $P'$.

Denote by $Q'$ the view-change quorum of $2f+1$ replicas whose new-view messages are included in
$P'$. 
By assumption, 
the only prepare messages in $P'$ whose view is higher than $v$ 
and have $OR'$ are faulty. Hence, there are at most $f$ of them. 
In view $v$, there may be additionally up to $f$ correct replicas outside $Q$
that have prepares for $OR'$.
Therefore in total, fast-certificate($OR'$) in $P'$ is at most $v$.
We conclude that either PFAST($OR'$) is false, or fast-certificate($OR'$) has view-number at most $v$,
or both.

\paragraph{PSLOW for $P'$.}
We now compute PSLOW for $P'$. 

We already showed that 
the number of prepare messages in $P'$ whose view is $v$ or higher 
and have $OR'$ is at most $2f$. 
In total, there are not enough prepares for 
a commit-certificate on $OR'$ to have a view number $v$ or higher.

On the other hand, $Q$ and $Q'$ intersect in at least one correct
replica. This replica reports the commit message it sent in view $v$ or higher (up to $v'-1$). 
By assumption, these commits all extend the value $OR_n$. 

Hence, PSLOW($OR'$) is false, and PSLOW($OR'_n$) is true for some $OR'_n$ extending $OR_n$. 

\medskip

\noindent
Putting the constraints on PFAST and PSLOW for $P'$ together, we conclude that
every safe leader-log for $P'$ extends $OR_n$, and we again arrive at a contradiction.
\end{proof}

\begin{claim}
Every leader-proof determines some safe value.
\end{claim}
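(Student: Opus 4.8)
The plan is to prove \emph{totality} of the safety rules: unlike \Thelma, where the rules are purely ``positive'' (a certificate only ever declares logs safe, never unsafe) so that the catch-all case suffices, here the \emph{extends} relation forces me to check that in every configuration at least one of rules (1)--(5) is actually \emph{reachable}. I would drive the argument by a case analysis on the two high-water marks of a leader-proof $P$: let $V_f$ be the largest value of fast-certificate($O$) over all logs $O$, and $V_s$ the largest value of slow-certificate($O$). Before the case split I would establish two structural facts. First, for a fixed view-number $v$, the logs $O$ with fast-certificate($O$) $=v$ are totally ordered by $\sqsubseteq$: each such log is witnessed by $f+1$ prepares in $P$ extending it with view at least $v$, and since $P$ carries at most $2f+1$ local prepares, two witness sets must share a prepare, whose log cannot extend two conflicting logs; being a finite chain it has a greatest element, and that element is PFAST. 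Second, the log with slow-certificate $=v$ is \emph{unique}, since two valid commit-certificates at the same view would share a correct replica, and a correct replica signs only one prepare per view, so the two logs coincide; this unique log is PSLOW.

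With these in hand, three of the four cases are immediate. If $V_f=V_s=-1$ there are no certificates, no PFAST or PSLOW holds, and rule (5) declares every log safe. If $V_f>V_s$, the greatest fast-certificate-$V_f$ log $O$ satisfies PFAST($O$) and slow-certificate($O'$) $\le V_s<V_f=$ fast-certificate($O$) for all $O'$, so rule (2) fires. Symmetrically, if $V_s>V_f$ the unique slow-certificate-$V_s$ log satisfies rule (1).

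The crux, and the step I expect to be the main obstacle, is the tie case $V_f=V_s=V\ge 0$. Here I must show that the greatest fast-certificate-$V$ log $O_f$ and the slow-certificate-$V$ log $O_s$ are \emph{comparable} rather than conflicting; if they conflicted, neither tie rule (3) nor (4) could fire and the claim would be false, so this comparability is exactly where an adversarial $P$ appears to threaten well-definedness. The key observation is that a commit-certificate for $O_s$ at view $V$ is itself a bundle of $2f+1$ prepares for $O_s$, and since $2f+1\ge f+1$ these prepares witness fast-certificate($O_s$) $\ge V$; by maximality fast-certificate($O_s$) $=V$, which places $O_s$ inside the fast-certificate-$V$ chain and hence forces $O_s\sqsubseteq O_f$. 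Proposing $O_f$ then makes every fast-certificate-$V$ log a prefix of the proposal, with $O_s\sqsubseteq O_f$ supplying the slow-side witness, so the relevant tie rule is satisfied.

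Finally I would remark that this comparability argument is just another instance of the quorum-intersection reasoning already used in the two preceding claims: validity of a commit-certificate forces genuine per-view prepares from correct replicas, and the one-prepare-per-view discipline converts any apparent conflict into a prefix relation. The remaining bookkeeping---treating $-1$ as ``no certificate'' so the PFAST/PSLOW maximality clauses behave as intended, and invoking finiteness of the logs in $P$ to guarantee the chains have greatest elements---is routine.
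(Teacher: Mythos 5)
Your route is far more ambitious than the paper's: the paper disposes of this claim with the same one-sentence positivity remark it uses for \Thelma{} (``the rules are all positive, so some safe log always exists''), whereas you try to prove totality of rules (1)--(5) outright, and your instinct is right that for \Velma{} positivity alone does not give totality, since the rules have preconditions that must be shown jointly exhaustive. Your chain lemma and your three easy cases ($V_f=V_s=-1$, $V_f>V_s$, $V_s>V_f$) are fine. The gap is exactly where you predicted it: the tie case. Your key observation --- that a commit-certificate for $O_s$ at view $V$ bundles $2f+1\ge f+1$ prepares and therefore forces fast-certificate($O_s$) $\ge V$ --- misreads the definition. The prepares counted by fast-certificate are the local prepare messages reported in the $2f+1$ view-change messages of $P$, one per replica of the view-change quorum; this is what the paper's explanatory paragraph under the definition says, and it is also the reading your own chain lemma depends on (the intersection argument ``two witness sets among at most $2f+1$ prepares must share one'' fails if prepares bundled inside commit-certificates also count, since then $P$ carries far more than $2f+1$ prepares). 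So the two halves of your proof silently use incompatible interpretations of ``prepare messages in $P$,'' and under either single interpretation the tie case breaks: under the cert-counting reading the chain lemma fails and ``the greatest fast-certificate-$V$ log'' need not exist; under the paper's reading your comparability claim $O_s\sqsubseteq O_f$ is simply false.

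Concretely, take $n=4$, $f=1$: the Byzantine leader $i_1$ of view $V$ sends $(a)$ to $i_2,i_3$ and $(b)$ to $i_4$; a client assembles a view-$V$ commit-certificate for $(a)$ from $\{i_1,i_2,i_3\}$ and delivers it only to $i_3$, which responds with a commit message; the new leader's view-change quorum is $\{i_1,i_3,i_4\}$, with Byzantine $i_1$ reporting local log $(b)$ at view $V$. Then fast-certificate($(b)$) $=V$ (prepares of $i_1,i_4$), fast-certificate($(a)$) $=-1$ (only $i_3$'s prepare extends it), and slow-certificate($(a)$) $=V$: the maximal fast log $(b)$ and the maximal slow log $(a)$ conflict at the same view. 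Moreover, your proposed resolution --- propose $O_f$ --- is the \emph{unsafe} direction here: $(a)$ can still be committed in view $V$ (commit messages from $i_2$, from $i_3$ already sent, and from Byzantine $i_1$), so $(a)$ is the only safe log, yet $O_f=(b)$. The correct tie-break is the one \Thelma{} uses (its rule~1 reads ``no higher than''): a same-view commit-certificate must override conflicting fast-certificates, because the $f+1$ correct replicas inside the certificate cannot have prepared a conflicting log in that view, so no conflicting fast decision in that view is possible. Note that in this scenario none of \Velma{}'s rules (1)--(5), as literally written, fires at all; so totality cannot be established by any argument without either strengthening rules (1)/(3) to absorb same-view ties in favor of the slow-certificate, or ruling out such leader-proofs --- and this one is attainable.
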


\begin{proof}[Proof Sketch]
Since the rules for determining the safe value for a proof are all
positive, i.e., no values are explicitly ruled out by it, there is
always a possible safe leader-log for every leader-proof. 
\end{proof}


\newpage
\section{\Zelma{}: Putting Fab and Zyzzyva View-Change Together}
\label{sec:zelma}
Having fixed the view-change in Zyzzyva allows us to combine the mechanism for
optimistic (fast track) execution with the parameterized $n=3f+2t+1$ failure model of FaB.
We name the combined solution \Zelma.

In \Zelma, a fast-quorum consists of $n-t$ replicas. The fast track allows a
leader to extend the current log with a new client request. The client can
commit a decision by seeing $n-t$ prepare messages. The optimistic 
mode is guaranteed to complete in periods of synchrony with a correct leader and
up to $t$ Byzantine replicas. 

A two-phase quorum consists of $n-f-t$ replicas. It is essentially PBFT adapted
to the parameterized fault model. More specifically, a client collects a
commit-certificate consisting of signed prepares from a quorum of $n-f-t$
replicas. It forwards the certificate to the replicas. A decision is reached when
a quorum of $n-f-t$ replicas received the commit-certificate.

The two-phase track is guaranteed to complete is
periods of synchrony with a correct leader, up to $f$ Byzantine replicas and
additionally up to $t$ slow replicas. 

\Zelma{} maintains safety at all times against up to $f$ Byzantine failures.

If progress is stalled, \Zelma{} provides eventual progress via
a view-change protocol. A view-change quorum consists of $n-f$ replicas. 

In \Zelma, a possible decision value is transferred across views in two
possible ways, corresponding to the two decision tracks of the protocol (fast
and two-phase):

In the fast track, a possible decision value manifests itself as $f+t+1$ prepare
messages. These prepares may potentially have different view numbers and
request-logs, but they all contain the committed decision as prefix. A
fast-certificate records the $(f+t+1)$-highest view-number among all the prepares for the
same request-log prefix. 

In the two-phase track, a decision manifests itself as a commit-certificate with
$2f+t+1$ identical prepares. 

Combining the two, \Zelma{} picks the highest view 
with either a commit-certificate or a fast-certificate. This becomes the
initial leader-log for a new-view. 
If the highest commit-certificate and fast-certificate have equal
view-numbers, then the initial leader-log is a concatenation of the
commit-certificate log, with any remaining entries from the fast-certificate log.

\newpage

\bibliographystyle{plain}
\bibliography{bft}

\end{document}